\newtheorem{theo}{Theorem}
\newtheorem{lemma}[theo]{Lemma}
\DeclareMathOperator*{\argmin}{arg\,min}
\begin{document}

\title{Non-Asymptotic Performance Analysis of Size-Based Routing Policies}

\author{\IEEEauthorblockN{Eitan Bachmat}
\IEEEauthorblockA{Department of Computer Science\\Ben-Gurion University\\Beer-Sheva, Israel, 84105. \\
ebachmat@cs.bgu.ac.il}
\and
\IEEEauthorblockN{Josu Doncel}
\IEEEauthorblockA{Department of Mathematics \\ University of the Basque Country, UPV/EHU\\
Leioa, Spain. 48940\\
josu.doncel@ehu.eus}
}

\maketitle
\thispagestyle{plain}
\pagestyle{plain}
\begin{abstract}
We investigate the performance of two size-based routing policies: the Size Interval Task Assignment (SITA) and 
 Task Assignment based on Guessing Size (TAGS).
We consider a system with two servers and Bounded Pareto distributed job sizes with tail parameter 1 where
the difference between the size of the largest and the smallest job is finite. 
We show that the ratio between the mean waiting 
time of TAGS over the mean waiting time of SITA is unbounded when the largest job size is large and the arrival rate times the largest job size is less than one.
We provide numerical experiments that show that our theoretical findings extend to Bounded Pareto distributed job sizes 
with tail parameter different to $1$.
\end{abstract}

\begin{IEEEkeywords}
Size-based Routing, Parallel Servers, Heavy-tailed distributions.
\end{IEEEkeywords}

\section{Introduction}
\begin{figure*}[t!]
\begin{minipage}[b]{0.45\linewidth}
\centering
\includegraphics[width=\columnwidth,clip=true,trim=0pt 480pt 0pt 70pt]{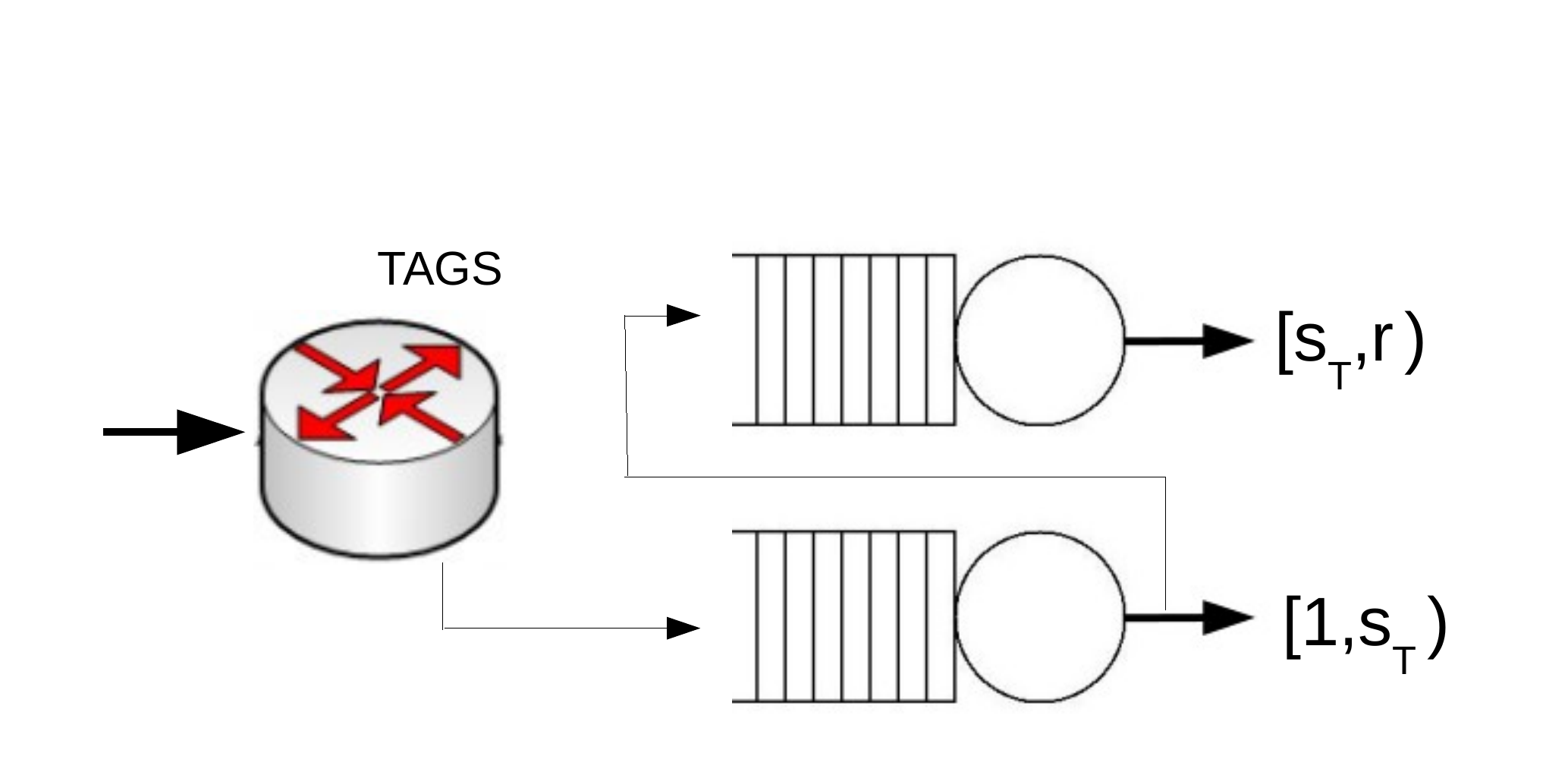}
\caption{A system operating under the TAGS policy}
\label{fig:tags-example}
\end{minipage}
\hspace{0.5cm}
\begin{minipage}[b]{0.45\linewidth}
\centering
\includegraphics[width=\columnwidth,clip=true,trim=0pt 480pt 0pt 70pt]{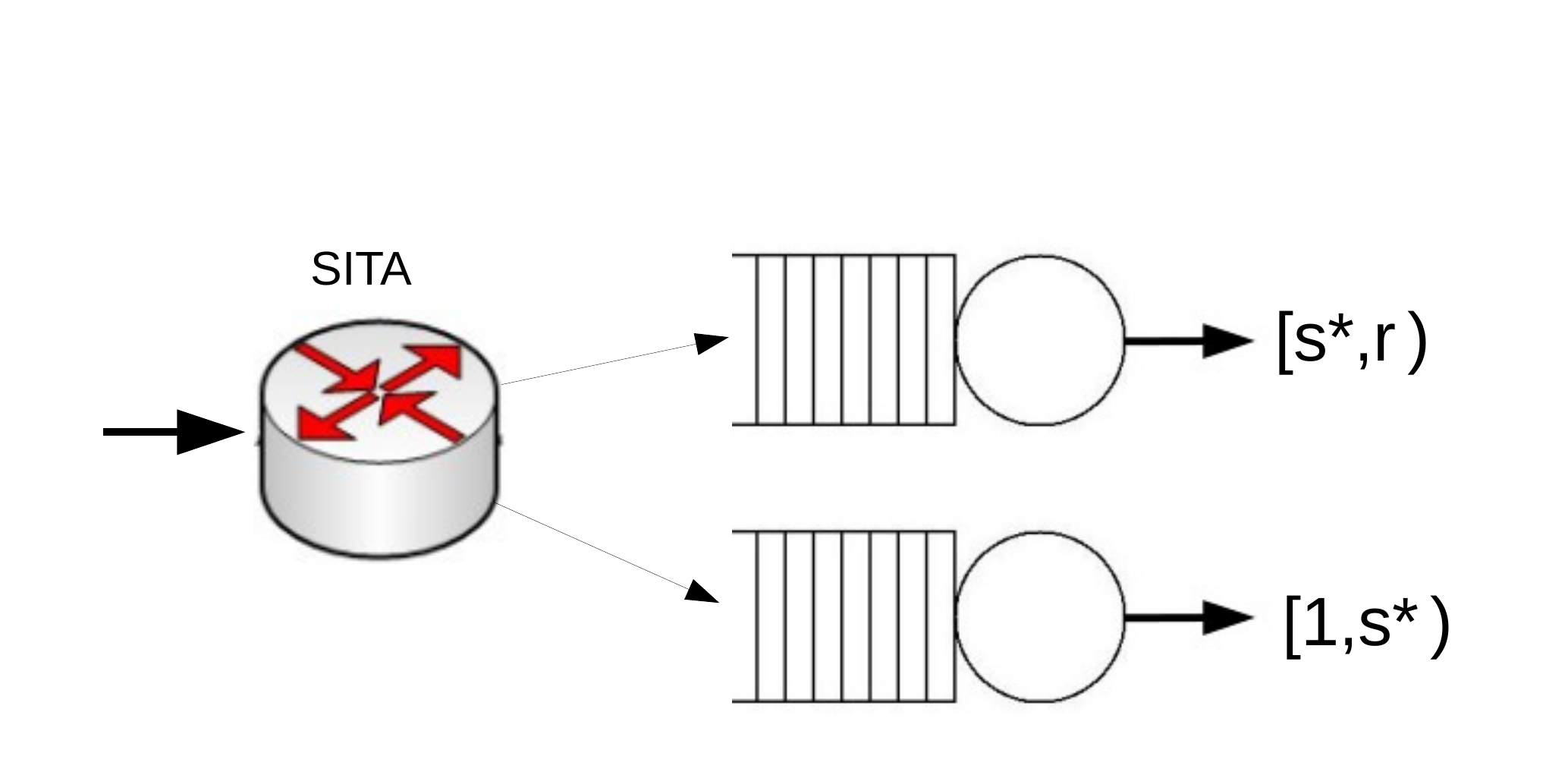}
\caption{A system operating under the SITA policy}
\label{fig:sita-example}
\end{minipage}
\end{figure*}

Many existing routing policies to parallel queues are included in the SQ(d) framework, where for each incoming job, $d  \geq 2$ 
servers are picked uniformly at random to observe their states and the job is routed to the server in the best state among the observed ones. It is known
that the performance of this kind of systems is very good (see \cite{foley2001join,weber1978optimal,richa2001power,winston1977optimality}), however 
the author in \cite{whitt1986deciding} showed that when the variability of the job size distribution is high this
family of policies are not optimal. This is, in fact, the regime where the size-based routing policies outperform 
the routing policies that belong the SQ(d) family of policies.

In this work, we study two size-based policies: the Size Interval Task Assignment (SITA) \cite{harchol1999choosing} and
the Task Assignment with Guessing Size (TAGS) \cite{Ha}. In the former policy, short jobs and long jobs are executed in different servers
and, therefore, it is assumed that the size of incoming tasks is known. In the latter policy, all the jobs are executed in one server
and, if a job does not end its service before a given deadline, it is stopped and enqueued in the other server, where it starts service from scratch. From the above definitions, it follows clearly that
the mean waiting time of jobs in the SITA system is smaller than in the TAGS system. Recently,
the authors in \cite{Bachmat2019PerformanceAS,BACHMAT2020102122} show for Bounded Pareto distributed job sizes that, 
when the difference between the smallest and the largest job
size tends to infinity and the total load of the system is less than one, the ratio of the mean waiting time of TAGS over the mean waiting time of SITA is upper bounded by 2. 

We consider a system with two servers and we compare the mean waiting time of both systems 
in a non-asymptotic regime where
the difference between the size of the largest and the smallest job is finite. We show that the ratio of the mean 
waiting time of the TAGS system over that of the SITA system is unbounded for the Bounded Pareto distribution with
tail parameter $1$ when the largest job size is large and the arrival rate times the largest job size is less than one.
Our numerical experiments show that the performance 
ratio of both systems is unbounded other values of the tail parameter of the Bounded Pareto distribution and it can be large
even if the arrival rate is not small.

\section{Model Description}
\label{sec:model}

\subsection{Notation}
We consider a system with two servers with equal capacity. We assume that the servers are 
FCFS queues. Arriving jobs follow a Poisson distribution with rate $\lambda$. 
The size of the jobs is given by a sequence of i.i.d. random variables denoted by $X$. We denote by 
$F(s)=\mathbb P[X<s]$ the cumulative distribution function of the job size distribution. 
We  assume $F(\cdot)$ to  be differentiable  and  we write $f(s)  =\frac{dF(s)}{ds}$.
We assume that the size of the smallest job is one and the size of the 
largest job is $r>1$. The load of the system is denoted by $\rho$ and to ensure stability we assume that $\rho<1$.

\subsection{The TAGS System}
We focus on the TAGS routing (see Figure~\ref{fig:tags-example}). Let $s\in[1,r].$ In 
this policy, all incoming jobs are sent to Server $1$. If a job has been served before $s$ 
units of time in Server 1, it leaves the system; 
otherwise, when the execution time equals $s$, it is stopped and sent to the end of the queue of Server 2, 
where the execution starts from scratch. Jobs that are executed in Server 2 are always executed until completion.

For a given value $s$, we denote by $W^{TAGS}(s)$ the random variable of 
the waiting time of incoming jobs under TAGS policy. We define as $s_T$ the value of $s$ such that the mean waiting time 
of the TAGS system is minimized, i.e., 
$$
s_T = \argmin_s \mathbb E[W^{TAGS}(s)].
$$

From the above description, it follows that
\begin{align}
\mathbb E[W^{TAGS}(s_T)]=&\mathbb E[W_1^{TAGS}(s_T)]\nonumber\\&+
\left(\int_{s_T}^rf(x)dx\right)s_T\nonumber\\&+\left(\int_{s_T}^rf(x)dx\right)\mathbb E[W_2^{TAGS}(s_T)],
\label{eq:tags}
\end{align}
where $\mathbb E[W_i^{TAGS}(s_T)]$ is the mean waiting time of jobs in Server $i$ when the threshold value is $s_T$.

\subsection{The SITA System}
We focus on the SITA routing  (see Figure~\ref{fig:sita-example}). This policy assumes that the size of incoming jobs is known \cite{DAA19,anselmi2019asymptotically}. Let $s\in[1,r].$ 
Under the SITA policy with cutoff $s$, jobs whose size is smaller than $s$ are sent to Server 1, whereas
jobs whose size is larger than $s$ to Server 2. The random variable of the waiting time of incoming jobs under the SITA
policy with cutoff $s$ is denoted by $W^{SITA}(s)$. We define as $s^*$ the value of $s$ such that the mean waiting time
of the SITA system is minimized, i.e., 
$$
s^* = \argmin_s \mathbb E[W^{SITA}(s)].
$$

The mean waiting time of jobs under the SITA policy with cutoff $s^*$ is given by

\begin{align}
\mathbb E[W^{SITA}(s^*)]&=\left(\int_1^{s^*}f(x)dx\right)\mathbb E[W_1^{SITA}(s^*)]\nonumber\\&+
\left(\int_{s^*}^rf(x)dx\right)\mathbb E[W_2^{SITA}(s^*)],
\label{eq:sita}
\end{align}
where $\mathbb E[W_i^{SITA}(s^*)]$ is the mean waiting time of jobs in Server $i$.

\subsection{The Bounded Pareto Distribution}

We now present the Bounded Pareto distribution. 
A distribution $X$ is said to be Bounded Pareto with parameters $1$, $r$ and $\alpha$ if its density 
has the following form: if $1\leq x \leq r$, then 
$$
f(x)=\frac{\alpha x^{-\alpha-1}}{(1-(1/r)^\alpha)},
$$
and $f(x)=0$ otherwise. Besides, the cumulative distribution function of the Bounded
Pareto distribution is given by the following expression:
\[ F(x)=\begin{cases} 
      0, & x\leq 1, \\
      \frac{1-(1/x)^{\alpha}}{1-(1/r)^{\alpha}}, & 1\leq x\leq r, \\
      1, & x \geq r.
   \end{cases}
\]
Besides, when $\alpha\neq 1$, we have that the mean of the distribution is given by
$
\frac{\alpha }{\alpha-1}\frac{1-(1/r)^{\alpha -1}}{1-(1/r)^{\alpha }}
$
whereas when $\alpha =1$
$
\frac{\ln (r)}{1-\frac{1}{r}}.
$
The Bounded Pareto distribution with large range and $0<\alpha<2$ is know to be a good model for high
variance job size distributions \cite{HD1997}.
We also note that the Bounded Pareto distribution with $\alpha =-1$ coincides with the uniform distribution on the interval $[1,r]$. Besides, the most favorable distribution of SITA and TAGS is given when $\alpha=1$ \cite{Bachmat2019PerformanceAS}. This is, indeed, the case we study in the next section.
\section{Bounded Pareto Distribution with $\alpha = 1$}
\label{sec:light}
We consider the Bounded Pareto distribution with $\alpha = 1$ and
we aim to compare the mean waiting time of a system with two queues operating under the 
TAGS routing with that of a system with two queues operating under the SITA routing. In the following
result, we provide a lower-bound of the performance of the TAGS system.
\begin{lemma}
For the Bounded Pareto distribution with $\alpha = 1$, if $\lambda r<1$,
$$
\mathbb E[W^{TAGS}(s_T)]> \lambda r.
$$
\label{lem:tags-lowerbound}
\end{lemma}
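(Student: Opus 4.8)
The plan is to bound $\mathbb{E}[W^{TAGS}(s_T)]$ from below by the mean job size $\mathbb{E}[X]$, and then to observe that for the $\alpha=1$ Bounded Pareto distribution this mean already exceeds $1$, which under the hypothesis $\lambda r<1$ is itself strictly larger than $\lambda r$.

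First I would use the decomposition~(\ref{eq:tags}). All three summands on its right-hand side are nonnegative, so it suffices to lower bound each of them by its pure ``service'' content, discarding the queueing delays. A job of size $x\le s_T$ is served to completion in Server~1, so its contribution to $\mathbb{E}[W_1^{TAGS}(s_T)]$ is at least $x$, whence $\mathbb{E}[W_1^{TAGS}(s_T)]\ge \int_1^{s_T} x f(x)\,dx$. A job of size $x>s_T$ is restarted from scratch in Server~2 and run there to completion, so its sojourn time in Server~2 is at least $x$, giving $\mathbb{E}[W_2^{TAGS}(s_T)]\ge \mathbb{E}[X\mid X>s_T]=\bigl(\int_{s_T}^r x f(x)\,dx\bigr)\big/\bigl(\int_{s_T}^r f(x)\,dx\bigr)$. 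Substituting these into~(\ref{eq:tags}) and retaining the (nonnegative) middle term, and using that $s_T\in[1,r]$ and $f$ is supported on $[1,r]$,
$$\mathbb{E}[W^{TAGS}(s_T)]\ \ge\ \int_1^{s_T} x f(x)\,dx+\int_{s_T}^r x f(x)\,dx\ =\ \int_1^r x f(x)\,dx\ =\ \mathbb{E}[X].$$
Equivalently, one can argue pathwise: every job receives in total at least its own size in service (its full size in Server~1 if short, or $s_T$ in Server~1 plus its full size in Server~2 if long), so $W^{TAGS}(s_T)\ge X$ almost surely.

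Second, I would invoke the formula for the mean of the Bounded Pareto distribution with $\alpha=1$ recorded in Section~\ref{sec:model}, namely $\mathbb{E}[X]=\ln(r)/(1-1/r)$, and then verify the elementary inequality $\ln r>1-1/r$ for every $r>1$: the function $h(r)=\ln r-1+1/r$ satisfies $h(1)=0$ and $h'(r)=(r-1)/r^2>0$ on $(1,\infty)$, hence $h(r)>0$. Therefore $\mathbb{E}[X]=\ln(r)/(1-1/r)>1$, and combining with the assumption $\lambda r<1$ gives $\mathbb{E}[W^{TAGS}(s_T)]\ge \mathbb{E}[X]>1>\lambda r$, which is the claim.

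The computations are routine; the only point requiring care is the reading of the per-server quantities $\mathbb{E}[W_i^{TAGS}(s_T)]$ as including service time, which is precisely what makes~(\ref{eq:tags}) consistent (note the standalone term $\bigl(\int_{s_T}^r f(x)\,dx\bigr)s_T$, which accounts for the service consumed in Server~1 by jobs later evicted). The conceptual content is that this lower bound on TAGS does not degrade as the load vanishes---it is driven purely by the work each job must receive---whereas the waiting time under SITA can be made small in that regime, and it is this asymmetry that will drive the unboundedness result stated in the next theorem.
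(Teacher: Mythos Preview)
Your argument rests on reading $\mathbb E[W_i^{TAGS}(s_T)]$ as a sojourn time (queueing delay plus service), and that reading is not the one the paper uses. In the paper, $W$ denotes waiting time in the queueing sense, i.e.\ pure queueing delay; the explicit formula used in the proof,
\[
\mathbb E[W_1^{TAGS}(s_T)]=\frac{\lambda(s_T-1)}{2(1-\rho)(1-\tfrac1r)},
\]
is a Pollaczek--Khinchine expression that vanishes as $\lambda\to 0$, which a sojourn time could not do. Under this reading, \eqref{eq:tags} is consistent without any service being counted: short jobs wait only in queue~1, and long jobs accrue (queue~1 wait) $+$ $s_T$ (wasted run, which is indeed a waiting cost) $+$ (queue~2 wait). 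Your claim that the standalone $s_T$-term forces the sojourn interpretation actually cuts the other way: if $\mathbb E[W_1^{TAGS}(s_T)]$ already included the Server~1 service, the $s_T$-term would be double counting for long jobs.

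The consequence is that the crucial step $\mathbb E[W^{TAGS}(s_T)]\ge \mathbb E[X]$ is false. Take any fixed $r>1$ and let $\lambda\downarrow 0$: with the feasible choice $s=r$ the system is a single M/G/1 queue whose mean waiting time tends to $0$, hence $\mathbb E[W^{TAGS}(s_T)]\to 0$, while $\mathbb E[X]=\ln(r)/(1-\tfrac1r)>1$. Thus no $\lambda$-free lower bound on $\mathbb E[W^{TAGS}(s_T)]$ can hold, and in particular your chain $\mathbb E[W^{TAGS}(s_T)]\ge \mathbb E[X]>1>\lambda r$ breaks at the first inequality. The paper's proof proceeds differently: it drops only the Server~2 queueing term in \eqref{eq:tags}, keeps the Server~1 queueing delay together with the wasted-service term $P(X>s_T)\,s_T$, lower bounds the $(1-\rho)^{-1}$ factor by $1$, and then exploits the hypothesis $\lambda r<1$ to show the resulting affine function of $s_T$ is minimised at $s_T=r$, where it exceeds $\lambda r$. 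The $\lambda$ in the bound is essential and cannot be removed.
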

\begin{proof}
We aim to study the optimal performance of the TAGS system with two queues. It follows from
\eqref{eq:tags} that
\begin{align*}
\mathbb E[W^{TAGS}(s_T)]\geq &\mathbb E[W_1^{TAGS}(s_T)]\\&+
\left(\int_{s_T}^rf(x)dx\right)s_T.
\end{align*}
For the Bounded Pareto distribution with $\alpha=1$, we have that
\begin{align*}
&\mathbb E[W_1^{TAGS}(s_T)]+
\left(\int_{s_T}^rf(x)dx\right)s_T=\\&\frac{\lambda(s_T-1)}{2(1-\rho)(1-\tfrac{1}{r})}+\frac{1-\tfrac{s_T}{r}}{1-\tfrac{1}{r}}\geq\\
&\frac{\lambda(s_T-1)}{2(1-\tfrac{1}{r})}+\frac{1-\tfrac{s_T}{r}}{1-\tfrac{1}{r}}=
\frac{s_T(\lambda-\tfrac{1}{r})+2-2\lambda}{2(1-\tfrac{1}{r})}.
\end{align*}
If $\lambda r<1$, then 
$\frac{x(\lambda-\tfrac{1}{r})+2-2\lambda}{2(1-\tfrac{1}{r})}$ 
decreases with $x$. Thus,
 \begin{align*}
\frac{s_T(\lambda-\tfrac{1}{r})+2-2\lambda}{2(1-\tfrac{1}{r})}&\geq  \frac{r(\lambda-\tfrac{1}{r})+2-2\lambda}{2(1-\tfrac{1}{r})}\\
&=\frac{\lambda (r-2)+1}{2(1-\tfrac{1}{r})}>\frac{\lambda (r-2)+ \lambda r}{2(1-\tfrac{1}{r})}=\lambda r.
\end{align*}
\end{proof}
It is important to note that, unlike in the previous work \cite{Bachmat2019PerformanceAS}, we do not need to 
assume Poisson arrivals to all the servers in the above result. 
In the following result, we characterize the mean waiting time of the SITA system.
\begin{lemma}
For the Bounded Pareto distribution with $\alpha=1$, when $\lambda r<1$ and $r$ is large,
$$
\mathbb E[W^{SITA}(s^*)]\leq \frac{\lambda(\sqrt{r}-1)^2}{\sqrt{r}(1- \tfrac{1}{r})^2}.
$$
\label{lem:sita-value}
\end{lemma}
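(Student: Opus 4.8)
The plan is to upper-bound $\mathbb E[W^{SITA}(s^*)]$ by evaluating the right-hand side of \eqref{eq:sita} at a conveniently chosen (suboptimal) cutoff, since $s^*$ minimizes the SITA mean waiting time and any feasible cutoff gives an upper bound. The natural candidate is $s=\sqrt r$, which is the standard ``balance the work'' choice for the Bounded Pareto distribution with $\alpha=1$: with density $f(x)=\frac{x^{-2}}{1-1/r}$ on $[1,r]$, the expected work contributed by jobs below $x$ scales like $\ln x$, so the cutoff $\sqrt r$ splits the total work roughly evenly between the two servers. I would therefore first write $\mathbb E[W^{SITA}(s^*)]\le \mathbb E[W^{SITA}(\sqrt r)]$ and then compute the two terms on the right of \eqref{eq:sita} explicitly for this cutoff.

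Next I would compute the ingredients for the cutoff $s=\sqrt r$. The fraction of jobs sent to Server~1 is $F(\sqrt r)=\frac{1-1/\sqrt r}{1-1/r}$ and to Server~2 is $1-F(\sqrt r)=\frac{1/\sqrt r-1/r}{1-1/r}=\frac{1}{\sqrt r}\cdot\frac{1-1/\sqrt r}{1-1/r}$. The per-server loads $\rho_1(\sqrt r)$ and $\rho_2(\sqrt r)$ are the arrival rate into each server times the conditional mean job size there; by the balancing property each is close to $\frac{\lambda\ln r}{2(1-1/r)}$, and in any case each is at most $\rho<1$. Each server is an M/G/1 FCFS queue, so by Pollaczek--Khinchine $\mathbb E[W_i^{SITA}(\sqrt r)]=\frac{\lambda_i\,\mathbb E[X_i^2]}{2(1-\rho_i)}$ where $X_i$ is the job size conditioned on being routed to Server~$i$; these second moments are again elementary integrals of $x^2 f(x)$ over $[1,\sqrt r]$ and $[\sqrt r,r]$ respectively, giving terms of order $\sqrt r$ and $r$ scaled by $1/(1-1/r)$. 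Substituting everything into \eqref{eq:sita}, collecting the $\frac{1}{1-1/r}$ factors, and simplifying should yield exactly $\frac{\lambda(\sqrt r-1)^2}{\sqrt r(1-1/r)^2}$, possibly after bounding the $(1-\rho_i)^{-1}$ factors by something clean (using $\lambda r<1$ to control the loads) and discarding lower-order terms under the ``$r$ large'' hypothesis.

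The main obstacle I anticipate is the bookkeeping in the Pollaczek--Khinchine terms: the conditional second moments $\mathbb E[X_i^2]$ carry their own normalizations (division by $F(\sqrt r)$ and $1-F(\sqrt r)$), the arrival rates into the servers are $\lambda F(\sqrt r)$ and $\lambda(1-F(\sqrt r))$, and these partially cancel against the prefactors $\int_1^{\sqrt r} f$ and $\int_{\sqrt r}^r f$ already sitting in \eqref{eq:sita}; getting the cancellations right so that the final expression collapses to the stated closed form — rather than to a messier bound — is where care is needed. A secondary point is justifying the ``$r$ large'' step: the exact value at $s=\sqrt r$ will have additional additive pieces (the mean-delay contribution linear in job size and any sub-leading terms in $1-\rho_i$), and I would need to argue these are dominated by $\frac{\lambda(\sqrt r-1)^2}{\sqrt r(1-1/r)^2}$ once $r$ is sufficiently large, using $\lambda r<1$ to keep the denominators bounded away from zero uniformly. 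If the clean identity does not hold exactly at $s=\sqrt r$, a fallback is to perturb the cutoff slightly or to bound each $(1-\rho_i)^{-1}$ by a constant and absorb the slack into the asymptotic claim.
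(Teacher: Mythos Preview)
Your plan is correct and essentially coincides with the paper's argument: both evaluate the right-hand side of \eqref{eq:sita} at the cutoff $\sqrt r$, apply Pollaczek--Khinchine to each server, bound $(1-\rho_i)^{-1}$ using that $\rho\le\lambda\ln r/(1-1/r)<(\ln r)/r\to 0$ under $\lambda r<1$ and $r$ large, and then perform the same elementary simplification to reach the stated closed form. The one difference is in how $\sqrt r$ is justified: the paper invokes an external reference asserting that for Bounded Pareto with $\alpha=1$ the optimal cutoff $s^*$ is the load-balancing point $\sqrt r$, whereas you use the cleaner observation that $s^*$ minimizes $\mathbb E[W^{SITA}(\cdot)]$, so evaluating at \emph{any} feasible cutoff (in particular $\sqrt r$) already gives an upper bound. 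Your route is slightly more self-contained and avoids the external citation.

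Your anticipated obstacles do not materialize. The cancellation you flag happens exactly: $\lambda_i\,\mathbb E[X_i^2]=\lambda\int x^2 f(x)\,dx$ over the relevant interval, which for $\alpha=1$ equals $\lambda(\sqrt r-1)/(1-1/r)$ and $\lambda(r-\sqrt r)/(1-1/r)$, and after multiplying by the routing probabilities both summands collapse to $\tfrac{\lambda(\sqrt r-1)^2}{2\sqrt r(1-1/r)^2}$, giving the claimed bound on the nose. There is no extra ``mean-delay contribution linear in job size'' in the waiting-time formula (waiting time here excludes the job's own service), so no fallback or perturbation of the cutoff is needed.
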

\begin{proof}
We first note that the load of each server is upper bounded by $\rho$. Therefore,
for the Bounded Pareto distribution with $\alpha=1$, we have that
$$
\mathbb E[W_1^{SITA}(s^*)]\leq \frac{ \lambda(s^*-1)}{2(1-\rho)(1-(1/r))}
$$
and
$$
\mathbb E[W_2^{SITA}(s^*)]\leq \frac{ \lambda(r-s^*)}{2(1-\rho)(1-(1/r))},
$$
where $\rho=\lambda \ln(r)$. We now note that, 
$$
\lambda r<1 \iff  \lambda \ln(r)<\frac{ln(r)}{r}.
$$
Since when $r$ is large, $\frac{\ln(r)}{r}$ tends to zero and $\rho=\lambda \ln(r)$, it follows that $\rho$ tends to zero when $\lambda r<1$.
As a result, 
$$
\mathbb E[W_1^{SITA}(s^*)]\leq \frac{ \lambda(s^*-1)}{2(1-(1/r))}
$$
and
$$
\mathbb E[W_2^{SITA}(s^*)]\leq \frac{ \lambda(r-s^*)}{2(1-(1/r))}.
$$
We know from Section 3.2.4 of \cite{HV10} that, for the Bounded Pareto distribution with $\alpha=1$, 
$s^*$ balances the load of both servers and, therefore, it can be obtained as follows:
$$
\int_1^{s^*}f(x)dx=\int_{s^*}^rf(x)dx \iff s^*=\sqrt{r}.
$$
As a result,
$$
\mathbb E[W_1^{SITA}(s^*)]\leq \frac{ \lambda(\sqrt{r}-1)}{2(1-(1/r))}
$$
and
$$
\mathbb E[W_2^{SITA}(s^*)]\leq \frac{ \lambda(r-\sqrt{r})}{2(1-(1/r))}.
$$

Therefore, from \eqref{eq:sita}, it follows that
\begin{align*}
\mathbb E[W^{SITA}(s^*)]&\leq \left(\int_1^{s^*}f(x)dx\right)\frac{ \lambda(\sqrt{r}-1)}{2(1-(1/r))}\nonumber\\&+
\left(\int_{s^*}^rf(x)dx\right)\frac{ \lambda(r-\sqrt{r})}{2(1-(1/r))}\\
&=\left(\int_1^{\sqrt{r}}f(x)dx\right)\frac{ \lambda(\sqrt{r}-1)}{2(1-(1/r))}\nonumber\\&+
\left(\int_{\sqrt{r}}^rf(x)dx\right)\frac{ \lambda(r-\sqrt{r})}{2(1-(1/r))}\\
&=\frac{ \lambda(1-\tfrac{1}{\sqrt{r}})(\sqrt{r}-1)}{2(1-(1/r))^2}\\&+
\frac{ \lambda(\tfrac{1}{\sqrt{r}}-\tfrac{1}{r})(r-\sqrt{r})}{2(1-(1/r))^2}\\&=
\frac{\lambda(\tfrac{1}{\sqrt{r}}-\tfrac{1}{r})(r-\sqrt{r})}{(1-(1/r))^2}\\&=
\frac{\lambda(\sqrt{r}-1)^2}{\sqrt{r}(1-(1/r))^2}.
\end{align*}
\end{proof}
From the above lemmas, we have that when $\lambda r<1$ and $r$ is large, the ratio between the mean waiting time of TAGS over the mean waiting time of SITA is lower bounded by
$$
 \frac{\lambda r}{\frac{\lambda(\sqrt{r}-1)^2}{\sqrt{r}(1-(1/r))^2}}=\frac{ (\sqrt{r}+1)^2}{\sqrt{r}},
$$
where the equality is obtained by simplifying the derived expression. Interestingly, the last expression does not depend on 
$\lambda$. Besides, we now show that it is increasing with $r$.

\begin{lemma}
The function $\frac{ (\sqrt{r}+1)^2}{\sqrt{r}}$ is increasing with $r$.
\label{prop:main-result}
 \end{lemma}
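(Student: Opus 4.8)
The plan is to reduce the claim to an elementary one–variable monotonicity fact by a change of variable. First I would substitute $u=\sqrt{r}$; since $r\mapsto\sqrt r$ is strictly increasing on $(1,\infty)$, it suffices to show that
$$
g(u):=\frac{(u+1)^2}{u}
$$
is strictly increasing in $u$ for $u>1$, and then compose with $u=\sqrt r$.

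Next I would expand and simplify $g(u)=\dfrac{u^2+2u+1}{u}=u+2+\dfrac{1}{u}$, so that
$$
g'(u)=1-\frac{1}{u^2}=\frac{u^2-1}{u^2}>0\qquad\text{for }u>1 .
$$
Hence $g$ is strictly increasing on $(1,\infty)$, and therefore $r\mapsto\frac{(\sqrt r+1)^2}{\sqrt r}=g(\sqrt r)$ is increasing in $r$, which is the claim. Alternatively one can skip the substitution and differentiate directly: writing $\frac{(\sqrt r+1)^2}{\sqrt r}=\sqrt r+2+\frac{1}{\sqrt r}$, its derivative in $r$ equals $\frac{1}{2\sqrt r}-\frac{1}{2r^{3/2}}=\frac{1}{2\sqrt r}\bigl(1-\tfrac{1}{r}\bigr)$, which is positive exactly when $r>1$.

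There is essentially no obstacle in this argument; it is pure calculus. The only point that deserves a word of care is that the monotonicity is asserted on the relevant domain $r>1$ — the function in fact attains a global minimum at $r=1$ — but this is precisely the range of interest, since by assumption the largest job size satisfies $r>1$. This, combined with Lemmas~\ref{lem:tags-lowerbound} and~\ref{lem:sita-value}, shows that the lower bound $\frac{(\sqrt r+1)^2}{\sqrt r}$ on the performance ratio grows without bound as $r\to\infty$.
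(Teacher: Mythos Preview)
Your proof is correct and follows essentially the same approach as the paper: both establish positivity of the derivative on $r>1$, arriving at the equivalent condition $\sqrt r>1$. The only cosmetic difference is that you first simplify $\tfrac{(\sqrt r+1)^2}{\sqrt r}=\sqrt r+2+\tfrac{1}{\sqrt r}$ (or substitute $u=\sqrt r$) before differentiating, whereas the paper applies the quotient rule directly and then simplifies; your route is slightly cleaner but not materially different.
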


\begin{proof}
We aim to show that $\frac{ (\sqrt{r}+1)^2}{\sqrt{r}}$ is increasing with $r$, which is true if and only if
$$
((\sqrt{r}+1)^2)' \sqrt{r}-(\sqrt{r})' (\sqrt{r}+1)^2>0 \iff
$$
$$
\frac{2(\sqrt{r}+1)}{2\sqrt{r}}\sqrt{r}-\frac{1}{2\sqrt{r}} (\sqrt{r}+1)^2>0 \iff
$$
$$
\sqrt{r}+1-\frac{1}{2\sqrt{r}} (\sqrt{r}+1)^2>0 \iff
$$
$$
1-\frac{1}{2\sqrt{r}} (\sqrt{r}+1)>0 \iff
$$
$$ 
2\sqrt{r}- (\sqrt{r}+1)>0 \iff \sqrt{r}- 1>0
$$
%
\end{proof}

From the above results and using that 
$\frac{ (\sqrt{r}+1)^2}{\sqrt{r}}$ 
tends to infinity when $r\to \infty$, it follows that
when $\lambda r<1$, the ratio between the mean waiting time of the TAGS system and the mean waiting time of the SITA 
system is lower bounded by a function that is unbounded and the following result follows.

\begin{theo}
The ratio of the mean waiting time of the TAGS system and the mean waiting time of the SITA 
system is unbonded.
\end{theo}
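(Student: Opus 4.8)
The plan is to assemble the three lemmas already proved; no new analytic work is needed. First I would fix a large value of $r$ and choose the arrival rate $\lambda$ so that $\lambda r<1$ — for instance $\lambda=\tfrac{1}{2r}$ satisfies this for every $r>1$ — so that the regime "$\lambda r<1$ and $r$ large" is simultaneously realisable for arbitrarily large $r$. This observation is the only point that deserves to be stated explicitly; everything else is a direct consequence of the preceding results.

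Next, I would combine the bounds. By Lemma~\ref{lem:tags-lowerbound}, $\mathbb E[W^{TAGS}(s_T)]>\lambda r$, and by Lemma~\ref{lem:sita-value}, $\mathbb E[W^{SITA}(s^*)]\le \frac{\lambda(\sqrt r-1)^2}{\sqrt r(1-1/r)^2}$ in this regime. Since $\mathbb E[W^{SITA}(s^*)]>0$, dividing and simplifying (the factor $\lambda$ cancels, and one uses $r-1=(\sqrt r-1)(\sqrt r+1)$) gives
\[
\frac{\mathbb E[W^{TAGS}(s_T)]}{\mathbb E[W^{SITA}(s^*)]}
\;>\;
\frac{\lambda r}{\frac{\lambda(\sqrt r-1)^2}{\sqrt r(1-1/r)^2}}
\;=\;
\frac{(\sqrt r+1)^2}{\sqrt r}
\;=\;
\sqrt r+2+\frac{1}{\sqrt r}.
\]

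Finally, I would invoke Lemma~\ref{prop:main-result}, which says $r\mapsto\frac{(\sqrt r+1)^2}{\sqrt r}$ is increasing, together with the evident fact that $\sqrt r+2+\tfrac{1}{\sqrt r}\to\infty$ as $r\to\infty$. Hence, letting $r\to\infty$ (and adjusting $\lambda$, e.g.\ $\lambda=\tfrac{1}{2r}$, to keep $\lambda r<1$), the ratio of the mean waiting times exceeds a quantity that grows without bound, so no finite constant can bound it, which is exactly the assertion of the theorem. There is no genuine obstacle here: the heavy lifting — the lower bound on $\mathbb E[W^{TAGS}]$, the upper bound on $\mathbb E[W^{SITA}]$ via the load-balancing cutoff $s^*=\sqrt r$, and the monotonicity of the resulting bound — has all been carried out in Lemmas~\ref{lem:tags-lowerbound}--\ref{prop:main-result}.
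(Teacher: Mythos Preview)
Your proposal is correct and follows essentially the same route as the paper: combine Lemma~\ref{lem:tags-lowerbound} and Lemma~\ref{lem:sita-value} to obtain the lower bound $\frac{(\sqrt r+1)^2}{\sqrt r}$ on the ratio, then use Lemma~\ref{prop:main-result} and the divergence of this expression as $r\to\infty$ to conclude. Your explicit remark that one can take, e.g., $\lambda=\tfrac{1}{2r}$ to keep $\lambda r<1$ for arbitrarily large $r$ is a helpful clarification that the paper leaves implicit.
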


\section{Numerical Experiments}
\label{sec:numerics}
We present our numerical work, which focuses on the ratio between the mean waiting time of the TAGS system and the mean waiting time of the SITA 
system for the Bounded Pareto distribution. We investigate the evolution of 
this ratio when we vary $r$ from $10$ to $1000$ for
different values of $\lambda$ and different values of $\alpha$.
In Figure~\ref{fig:sita-vs-tags-lambda}, we consider that $\alpha = 1$ and we depict the ratio between the mean waiting time of the TAGS system and the mean waiting time of the SITA 
system as a function
of $r$ for different values of $\lambda$. We observe that the maximum of this ratio is 4 when $\lambda=0.05$, 8 when $\lambda=0.01$,
$11$ when $\lambda=0.005$ and when $\lambda=0.001$ the performance ratio is increasing with $r$ for the considered values 
of $r$. This illustration shows that the maximum  over $r$ of the performance ratio under consideration increases 
when we decrease $\lambda$ and also that it can be large even if the value of the arrival rate is not very small.
In Figure~\ref{fig:sita-vs-tags-alpha}, we consider an arrival rate of $0.001$ and we study the evolution 
of the ratio between the mean waiting time of the TAGS system and the mean waiting time of the SITA 
system over $r$ for several values of $\alpha$ larger than one. 
We observe that the performance ratio under study is increasing with $r$ for the considered values of $r$. Similar results have
been obtained for this performance ratio for different values of $\alpha$ which are smaller than one, that is, 
ratio between the mean waiting time of TAGS and the mean waiting time of SITA is also increasing with $r$
when $\alpha$ is smaller than 1. We omit this illustration due to lack of space.
\begin{figure}
\centering
\includegraphics[width=0.95\columnwidth,clip=true,trim=10pt 200pt 10pt 210pt]{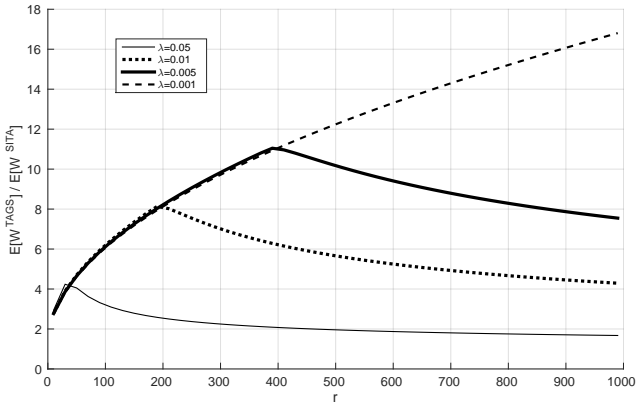}
\caption{The ratio $\mathbb E[W^{TAGS}(s_T)] / \mathbb E[W^{SITA}(s_T)]$ as a function of $r$ for different values of $\lambda$.}
\label{fig:sita-vs-tags-lambda}
\end{figure}
\begin{figure}
\centering
\includegraphics[width=0.95\columnwidth,clip=true,trim=0pt 200pt 10pt 210pt]{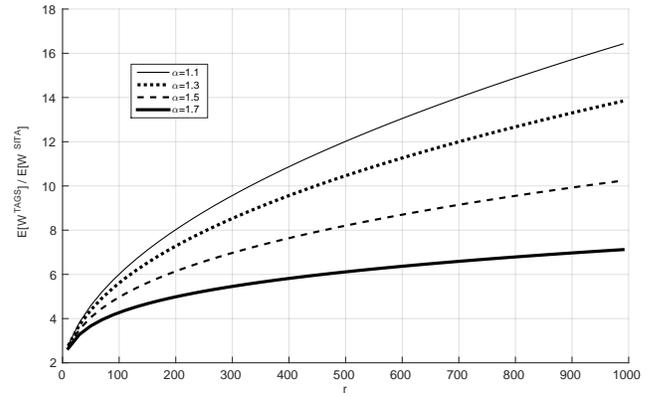}
\caption{The ratio $\mathbb E[W^{TAGS}(s_T)] / \mathbb E[W^{SITA}(s_T)]$ as a function of $r$ for different values of $\alpha$.}
\label{fig:sita-vs-tags-alpha}
\end{figure}
\section*{Acknowledgements}
 Josu Doncel has received funding from the Department of Education of the Basque Government through the Consolidated Research Group MATHMODE (IT1294-19), from theMarie Sklodowska-Curie grant agreement No 777778 and from from the Spanish Ministry of Science and Innovation with reference PID2019-108111RB-I00 (FEDER/AEI).\\
Eitan Bachmat's  work was supported by the German Science Foundation (DFG) through the grant, Airplane Boarding, (JA 2311/3-1). 
\bibliographystyle{IEEEtran}
\bibliography{IEEEabrv,tagsbib}

\end{document}